\documentclass[conference]{IEEEtran}
\usepackage{bm}
\usepackage{cite}
\usepackage{amsmath,amssymb,amsfonts}
\usepackage{algorithm}
\usepackage{algpseudocode}
\usepackage{graphicx}
\usepackage{textcomp}
\usepackage{xcolor}
\usepackage{subfigure}
\allowdisplaybreaks[4]

\def\BibTeX{\rm B\kern-.05em{\sc i\kern-.025em b}\kern-.08emT\kern-.1667em\lower.7ex\hbox{E}\kern-.125emX}

\newtheorem{lemma}{Lemma}

\begin{document}
\title{MiLAC-Aided Beamforming for MIMO Over-the-Air Computation}
\author{\IEEEauthorblockN{Yaru Wang, Deyou Zhang, Qingchao Li, Jun Liu, and Chuang Shi}
\IEEEauthorblockA{\textit{School of Electronic and Information Engineering, Beihang University, Beijing 100191, China} \\
email: \{yaruw, deyou, qingchaoli, liujun2019, shichuang\}@buaa.edu.cn}
}

\maketitle

\begin{abstract}
Over-the-air computation (AirComp) enables low-latency wireless data aggregation, but its accuracy is limited by imperfect signal alignment over fading channels and receiver noise. Fully digital beamforming improves aggregation accuracy in multiple-input multiple-output (MIMO) AirComp systems but requires one radio-frequency (RF) chain per antenna. To reduce this hardware burden, we investigate microwave linear analog computer (MiLAC)-aided beamforming for MIMO AirComp. Under a lossless and reciprocal MiLAC model, we jointly optimize the transmit digital precoding matrices and the receive-side MiLAC aggregation matrix to minimize the mean squared error (MSE). An alternating optimization algorithm is developed, in which the precoding matrices are optimally updated using the Karush--Kuhn--Tucker conditions and bisection, while the resulting convex aggregation matrix subproblem is solved globally using projected gradient descent. Numerical results verify the algorithm's convergence and demonstrate that MiLAC-aided beamforming approaches the MSE performance of fully digital beamforming with substantially fewer RF chains and outperforms phase-shifter-based hybrid beamforming under the same RF-chain budget.
\end{abstract}

\begin{IEEEkeywords}
Microwave linear analog computer, beamforming, multiple-input multiple-output, over-the-air computation.
\end{IEEEkeywords}

\IEEEpeerreviewmaketitle

\section{Introduction}
The Internet of Things (IoT) is expected to connect billions of edge devices (EDs), making wireless data aggregation (WDA) essential to many emerging applications \cite{FudongLi2022Utility-ICC}. However, conventional WDA separates communication from computation and becomes increasingly inefficient as the number of EDs grows under limited spectrum and stringent latency constraints \cite{GuangxuZhu2021OTA-WC}. Over-the-air computation (AirComp) addresses this limitation by exploiting waveform superposition to aggregate concurrently transmitted data, thereby enabling low-latency and spectrum-efficient WDA \cite{GuangxuZhu2021OTA-WC, AlphanSahin2023Survey-CST, ZhengChen2023OTA-Network}.

The performance of AirComp relies on signal alignment among distributed EDs, which is challenging over fading channels, while receiver noise further degrades aggregation accuracy. Multiple-input multiple-output (MIMO) techniques have therefore been incorporated into AirComp systems to exploit additional spatial degrees of freedom (DoFs), motivating extensive studies on beamforming for MIMO AirComp \cite{GuangxuZhu2019MIMO-IOTJ, DingzhuWen2019Reduced-TWC, XuShi2024Beamforming-TWC}.

Implementing MIMO AirComp over large-scale antenna arrays using fully digital beamforming incurs high hardware cost and power consumption because each antenna requires a dedicated radio-frequency (RF) chain. Phase-shifter-based hybrid beamforming reduces the number of RF chains \cite{XiongfeiZhai2021Hybrid-TCOM, ShusenJing2023Transceiver-TWC}, but its constant-modulus constraint limits the flexibility of analog-domain transformations for multi-stream signal alignment. This limitation motivates alternative analog architectures with greater processing flexibility.

The microwave linear analog computer (MiLAC) is a reconfigurable microwave network capable of implementing a broad class of linear transformations directly in the analog domain \cite{MatteoNerini2025MiLACI-TSP, MatteoNerini2025MiLACII-TSP}. MiLAC-aided beamforming has recently been investigated for large-scale MIMO systems and can achieve the capacity of fully digital beamforming under lossless and reciprocal constraints \cite{MatteoNerini2025MiLACII-TSP, MatteoNerini2026MIMO-TWC, MatteoNerini2025Capacity-arXiv}. However, existing designs focus on communication-oriented objectives, such as rate and capacity maximization, rather than the computation-oriented signal alignment required by AirComp. To the best of our knowledge, MiLAC-aided beamforming for MIMO AirComp remains unexplored.

Motivated by this gap, we investigate a MiLAC-aided MIMO AirComp system comprising multiple multi-antenna EDs and a MiLAC-equipped multi-antenna AP. The main contributions are summarized as follows.

1) Using an equivalent characterization of lossless and reciprocal MiLACs, we express the physical feasibility of the MiLAC aggregation matrix through a spectral-norm constraint and formulate a joint MSE minimization problem over the ED-side digital precoding matrices and the AP-side MiLAC aggregation matrix.

2) We develop an alternating optimization (AO) algorithm in which the digital precoding matrices are optimally updated using the Karush--Kuhn--Tucker conditions and one-dimensional bisection, while the convex MiLAC aggregation matrix subproblem is solved to its global optimum using projected gradient descent (PGD). The optimized aggregation matrix is subsequently mapped to the MiLAC scattering and admittance matrices.

3) Numerical results show that the proposed design approaches the MSE performance of fully digital beamforming with substantially fewer RF chains at the AP and outperforms phase-shifter-based hybrid beamforming under the same RF-chain budget.

\section{System Model and Problem Formulation}\label{sec:SMPF}
\subsection{System Model}
As illustrated in Fig.~\ref{fig:SystemModel}, we consider a MiLAC-aided MIMO AirComp system consisting of an $M$-antenna AP and $K$ distributed EDs, each equipped with $N$ antennas, where $M \gg N$. Each ED $k \in \mathcal K \triangleq \{1, 2, \cdots, K\}$ employs $N$ RF chains to perform fully digital beamforming and transmit an $L$-dimensional symbol vector, where $L \le N$. At the AP, an $(M + L)$-port MiLAC is cascaded with $L$ RF chains to perform signal aggregation. Following \cite{MatteoNerini2025Capacity-arXiv}, the RF chains at each ED are modeled as voltage generators with a series impedance $Z_0$, e.g., $Z_0 = 50~\Omega$, whereas those at the AP are modeled as terminals loaded with the same impedance. Throughout this paper, all RF-domain voltage signals are represented by normalized complex envelopes referenced to $Z_0$, such that their squared Euclidean norms represent normalized signal powers. The reference impedance $Z_0$ is retained for characterizing the circuit-level input--output relationship of the MiLAC.

Let $\bm s_k = [s_{k, 1}, \cdots, s_{k, L}]^T \in \mathbb C^{L \times 1}$ denote the symbol vector transmitted by ED $k$, $\forall k \in \mathcal{K}$. For simplicity, the symbol vectors $\{\bm s_k\}$ are assumed to be independent and identically distributed (i.i.d.), with $\bm s_k \sim \mathcal {CN} (\bm 0, \bm I_L)$. The target function vector to be recovered at the AP is the arithmetic mean of all EDs' symbol vectors, given by
\begin{equation}
    \bm s = \frac{1}{K} \sum_{k=1}^K \bm s_k.
\end{equation}

For ED $k$, let $\bm W_k \in \mathbb C^{N \times L}$ denote its digital precoding matrix. The normalized source signal generated by its RF chains can then be expressed as
\begin{equation}
    \bm c_k = \bm W_k \bm s_k, ~\forall k \in \mathcal K.
\end{equation}
Since $\mathbb{E}[\bm s_k \bm s_k^H] = \bm I_L$, the corresponding source-signal power is $\mathbb{E}[\|\bm c_k\|_2^2] = \|\bm W_k\|_F^2$. Let $P_k$ denote the maximum normalized source-signal power budget of ED $k$. The digital precoding matrix therefore satisfies $\|\bm W_k\|_F^2 \le P_k$. Following \cite{MatteoNerini2026MIMO-TWC}, under perfect impedance matching between the transmit antennas and the source impedance $Z_0$, the signal delivered to the transmit antennas is $\bm x_k = \bm c_k/2$.

Based on the AirComp mechanism, all EDs simultaneously transmit their signals over the same time--frequency resource. The signal received at the AP is then given by
\begin{equation}
    \bm y = \sum_{k = 1}^K \bm H_k \bm x_k + \bm n = \frac{1}{2} \sum_{k = 1}^K \bm H_k \bm W_k \bm s_k + \bm n,
\end{equation}
where $\bm H_k \in \mathbb C^{M \times N}$ denotes the channel from ED $k$ to the AP, and $\bm n \sim \mathcal {CN} (\bm 0, \sigma_n^2 \bm I_M)$ denotes the additive white Gaussian noise (AWGN) vector at the AP. The received signal $\bm y$ is subsequently processed by the MiLAC, yielding the estimated function vector
\begin{equation}
    \begin{aligned}
        \hat{\bm s} = & \frac{1}{K} \bm F_{\rm MiLAC} \bm y \\[1ex]
        = & \frac{1}{K} \left(\frac{1}{4}\sum_{k = 1}^K \bm F \bm H_k \bm W_k \bm s_k + \frac{1}{2} \bm F \bm n \right),
    \end{aligned}
\end{equation}
where $\bm F_{\rm MiLAC} \in \mathbb C^{L \times M}$ is the aggregation matrix physically implemented by the MiLAC, and $\bm F \triangleq 2 \bm F_{\rm MiLAC}$ is its scaled counterpart introduced for notational convenience.

\begin{figure}[t!]
\vskip2pt
\centering
\includegraphics[width = 6.8cm]{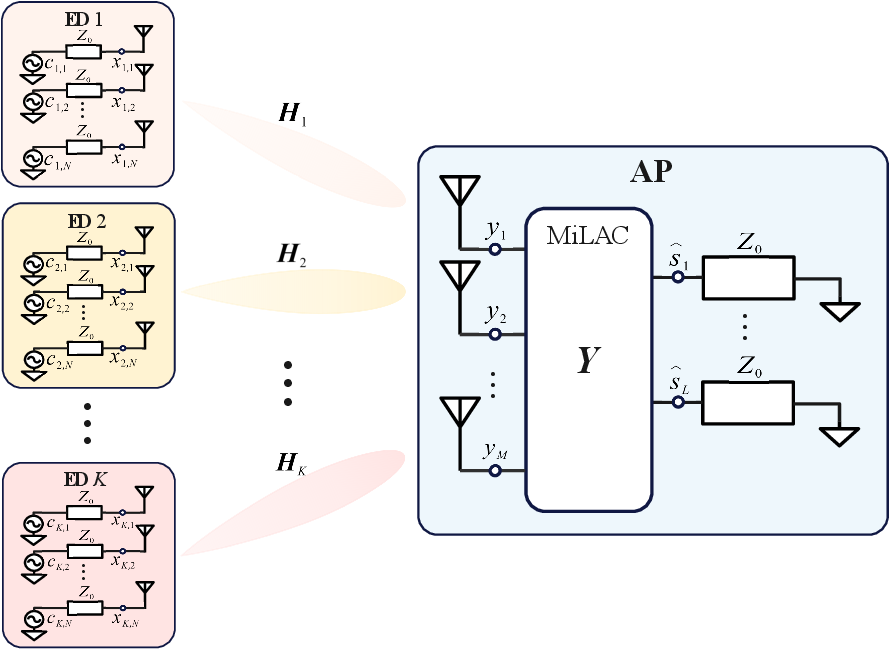}
\caption{A MiLAC-aided MIMO AirComp system.} \label{fig:SystemModel}
\end{figure}

\subsection{MiLAC Modeling}
In this paper, we consider the fully connected MiLAC topology proposed in \cite{MatteoNerini2025MiLACI-TSP, MatteoNerini2025MiLACII-TSP}, in which each port is connected to ground through a tunable shunt admittance, while each pair of distinct ports is interconnected through a tunable mutual admittance. We index the first $M$ ports as the input ports connected to the AP antennas and the remaining $L$ ports as the output ports connected to the RF chains. Let $\bm Y \in \mathbb C^{(M+L) \times (M+L)}$ denote the admittance matrix of the resulting $(M+L)$-port microwave network.

Following \cite{MatteoNerini2025MiLACI-TSP, MatteoNerini2025MiLACII-TSP}, we focus on a lossless and reciprocal MiLAC. According to microwave network theory \cite{Pozar2011microwaveengineering}, losslessness and reciprocity require the admittance matrix to be skew-Hermitian and symmetric, respectively, i.e., $\bm Y^H = - \bm Y$ and $\bm Y^T = \bm Y$. These conditions imply that $\bm Y$ is a purely imaginary and symmetric matrix. Moreover, since the eigenvalues of a skew-Hermitian matrix are purely imaginary, the matrix $\bm I_{M+L} + Z_0 \bm Y$ is nonsingular for $Z_0 > 0$.

As established in \cite{MatteoNerini2025MiLACI-TSP, MatteoNerini2025MiLACII-TSP}, the linear transformation from the $M$ input ports to the $L$ output ports is characterized by the aggregation matrix
\begin{equation}\label{eq:F_MiLAC}
    \bm F_{\rm MiLAC} = \left[ (\bm I_{M + L} + Z_0 \bm Y)^{-1} \right]_{M + 1 : M + L, 1 : M},
\end{equation}
where $[\bm A]_{\mathcal I, \mathcal J}$ denotes the submatrix of $\bm A$ formed by the rows indexed by $\mathcal I$ and the columns indexed by $\mathcal J$.

Although $\bm Y$ directly characterizes the circuit parameters of the MiLAC, the dependence of $\bm F_{\rm MiLAC}$ on $\bm Y$ through a matrix inverse makes it difficult to directly characterize the feasible set of $\bm F_{\rm MiLAC}$. We therefore introduce the scattering matrix $\bm \Theta \in \mathbb C^{(M+L) \times (M+L)}$ as an equivalent representation of the MiLAC. With a common reference impedance $Z_0$ at all ports, the scattering and admittance matrices are related through the Cayley transform \cite{Pozar2011microwaveengineering}
\begin{subequations}\label{eq:Theta_Y_relation}
\begin{align}
\bm \Theta & = \left(\bm I_{M+L}+Z_0\bm Y\right)^{-1} \left(\bm I_{M+L}-Z_0\bm Y\right) \label{eq:Theta_Y_relation_a} \\[1ex]
& \overset{(a)}{=} -\bm I_{M+L} + 2\left(\bm I_{M+L}+Z_0\bm Y\right)^{-1}, \label{eq:Theta_Y_relation_b}
\end{align}
\end{subequations}
where $(a)$ follows by writing $\bm I_{M+L} - Z_0 \bm Y = 2 \bm I_{M+L} - \left(\bm I_{M+L} + Z_0 \bm Y\right)$. Since the submatrix of $\bm I_{M+L}$ corresponding to rows $(M+1 : M+L)$ and columns $(1 : M)$ is a zero matrix, \eqref{eq:F_MiLAC} and \eqref{eq:Theta_Y_relation_b} yield $\bm F_{\rm MiLAC} = \frac{1}{2}[\bm \Theta]_{M+1:M+L,1:M}$ or equivalently
\begin{equation}\label{eq:F_MiLAC_Theta}
    \bm F = [\bm \Theta]_{M+1:M+L,1:M}.
\end{equation}

Furthermore, the skew-Hermitian property of $\bm Y$ ensures that its Cayley transform is unitary, whereas the symmetry of $\bm Y$ ensures that $\bm\Theta$ is symmetric. Therefore,
\begin{equation}\label{eq:Theta_constraints}
\bm \Theta^H \bm \Theta = \bm I_{M+L},~
\bm \Theta^T = \bm\Theta.
\end{equation}
Consequently, the scaled aggregation matrix $\bm F$ is the input--output block of a symmetric and unitary scattering matrix. This characterization is used in the following subsection to express the feasible set of $\bm F$ through a spectral-norm constraint.

\subsection{Problem Formulation}
Our objective is to minimize the distortion between $\hat{\bm s}$ and $\bm s$, as measured by the MSE defined as
\begin{align}
    {\rm MSE} = &~ \mathbb E[\|\hat{\bm s} - \bm s\|_2^2] \nonumber \\[2ex]
    = &~ \frac{1}{K^2} \left(\mathbb{E} \left[ \left\|\sum\limits_{k = 1}^K \bm Z_k \bm s_k + \frac{1}{2} \bm F \bm n \right\|_2^2 \right]\right) \nonumber \\[2ex]
    \overset{(a)}{=} &~ \frac{1}{K^2}\left(\sum\limits_{k = 1}^K \left\| \bm Z_k \right\|_F^2 + \frac{1}{4} \sigma_n^2 \left\|\bm F \right\|_F^2\right), \label{eq:MSE}
\end{align}
where $\bm Z_k = \frac{1}{4} \bm F \bm H_k \bm W_k - \bm I_L$, and $(a)$ follows from the mutual independence of $\bm s_1, \ldots, \bm s_K$, and $\bm n$.

To this end, we formulate the following optimization problem:
\begin{subequations}\label{OP1}
    \begin{align}
        \min_{\{\bm W_k\}, \bm F, \bm \Theta} &~~ {\rm MSE} \\
        {\rm s.t.} &~~ \left\|\bm W_k\right\|_F^2 \le P_k, \forall k \in \mathcal K, \\[1ex]
        &~~ \bm F = [\bm \Theta]_{M + 1 : M + L, 1 : M}, \\[1ex]
        &~~\bm \Theta^H \bm \Theta = \bm I_{M+L},~
\bm \Theta^T = \bm\Theta.
    \end{align}
\end{subequations}

The full scattering matrix $\bm \Theta$ does not need to be optimized explicitly. In particular, \cite[Proposition 1]{ZheyuWu2026MiLACMISO-arXiv} establishes that the existence of a symmetric and unitary scattering matrix having $\bm F$ as its input--output block is equivalent to a spectral-norm constraint on $\bm F$. We restate this result in the following lemma.

\begin{lemma}\label{lem:unitary_completion}
Let $\bm \Theta \in \mathbb C^{(M + L) \times (M + L)}$ be a symmetric and unitary matrix, and define $\bm F = [\bm \Theta]_{M + 1 : M + L, 1 : M}$. Then, $\|\bm F\|_2 \le 1$. Conversely, for any matrix $\bm F \in \mathbb C^{L \times M}$ satisfying $\|\bm F\|_2 \le 1$, there exist symmetric matrices $\bm \Theta_{11} \in \mathbb C^{M \times M}$ and $\bm \Theta_{22} \in \mathbb C^{L \times L}$ such that
\begin{equation*}
    \bm \Theta = \left[ \begin{array}{cc}\bm \Theta_{11} & \bm F^T \\\bm F & \bm \Theta_{22}\end{array} \right]
\end{equation*}
is unitary and symmetric.
\end{lemma}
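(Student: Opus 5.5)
The forward direction is immediate. Since $\bm\Theta$ is unitary, every submatrix has spectral norm at most $\|\bm\Theta\|_2=1$. Concretely, $\bm F = \bm E_2^T\bm\Theta\bm E_1$, where $\bm E_1=[\bm I_M;\bm 0]$ and $\bm E_2=[\bm 0;\bm I_L]$ have orthonormal columns, so $\|\bm F\|_2\le\|\bm E_2\|_2\|\bm\Theta\|_2\|\bm E_1\|_2=1$. Symmetry of $\bm\Theta$ is not needed here.

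For the converse, I will construct the completion explicitly from a singular value decomposition of $\bm F$. Write $\bm F=\bm U\bm\Sigma\bm V^H$, with $\bm U\in\mathbb C^{L\times L}$ and $\bm V\in\mathbb C^{M\times M}$ unitary, and $\bm\Sigma=[\bm D,\ \bm 0_{L\times(M-L)}]$, where $\bm D={\rm diag}(d_1,\dots,d_L)$ and $0\le d_i\le1$. (The case $L>M$ is handled analogously with the roles of the blocks swapped; the paper has $M\gg L$.) Set $\bm C=(\bm I_L-\bm D^2)^{1/2}$, which exists because $\|\bm F\|_2\le1$; this is the only place the hypothesis enters.

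The key step is to build a symmetric unitary "core" of size $(M+L)\times(M+L)$ in the rotated coordinates:
\begin{equation*}
\bm\Phi=\left[\begin{array}{ccc} -\bm C & \bm 0 & \bm D\\ \bm 0 & \bm I_{M-L} & \bm 0\\ \bm D & \bm 0 & \bm C\end{array}\right].
\end{equation*}
Its symmetry is evident. Unitarity follows from the relations $\bm C^2+\bm D^2=\bm I$ and $\bm C\bm D=\bm D\bm C$, both of which hold because the blocks are real diagonal. The $(1\!:\!M)\times(1\!:\!M)$ block of $\bm\Phi$ is ${\rm diag}(-\bm C,\bm I_{M-L})$, its bottom-left block is $\bm\Sigma$, and its bottom-right block is $\bm C$.

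I then undo the rotation while preserving symmetry, which forces the use of a congruence by a transpose rather than a unitary similarity. Define
\begin{equation*}
\bm\Theta=\left[\begin{array}{cc}\bm V^* & \bm 0\\ \bm 0 & \bm U\end{array}\right]\bm\Phi\left[\begin{array}{cc}\bm V^H & \bm 0\\ \bm 0 & \bm U^T\end{array}\right].
\end{equation*}
The right factor is the transpose of the left factor, so $\bm\Theta^T=\bm\Theta$. Both factors are unitary, so $\bm\Theta$ is unitary. The bottom-left block is $\bm U\bm\Sigma\bm V^H=\bm F$, as required. By symmetry the top-right block is $\bm F^T$. The diagonal blocks, $\bm\Theta_{11}=\bm V^*{\rm diag}(-\bm C,\bm I)\bm V^H$ and $\bm\Theta_{22}=\bm U\bm C\bm U^T$, are symmetric by the same congruence argument.

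The main obstacle is finding a core $\bm\Phi$ that is simultaneously symmetric and unitary. The usual Halmos unitary dilation,
\begin{equation*}
\left[\begin{array}{cc} \bm D & -\bm C\\ \bm C & \bm D\end{array}\right],
\end{equation*}
is not symmetric. The sign placement above, $-\bm C$ in one diagonal block and $+\bm C$ in the other, is what fixes this, and I would verify unitarity of $\bm\Phi$ by a direct block multiplication.
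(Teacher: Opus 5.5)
Your proof is correct. The forward direction is the standard bound on a submatrix of a unitary matrix. For the converse, the core $\bm\Phi$ is a real symmetric involution ($\bm\Phi^H\bm\Phi=\bm\Phi^2=\bm I_{M+L}$), and the transpose congruence $\bm\Theta=\bm P\bm\Phi\bm P^T$ with $\bm P=\mathrm{blkdiag}(\bm V^*,\bm U)$ keeps $\bm\Theta$ symmetric and unitary while placing $\bm U\bm\Sigma\bm V^H=\bm F$ in the lower-left block. This is essentially the SVD-based unitary-completion procedure the paper cites from Proposition 1 of Wu et al.\ without reproducing, so you have written out the construction it defers to that reference.
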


\begin{IEEEproof}
This result follows from \cite[Proposition 1]{ZheyuWu2026MiLACMISO-arXiv}.
\end{IEEEproof}

By Lemma~\ref{lem:unitary_completion}, the scattering matrix $\bm\Theta$ can be eliminated from \eqref{OP1}, yielding the following equivalent problem:
\begin{subequations}\label{eq:prob_MiLAC_simplification}
    \begin{align}
        \min_{\{\bm W_k\}, \bm F} &~~ {\rm MSE} \\
        {\rm s.t.} &~~ \left\|\bm W_k\right\|_F^2 \le P_k, \forall k \in \mathcal K, \\[2ex]
        &~~ \left\|\bm F\right\|_2 \le 1.
    \end{align}
\end{subequations}

\section{Alternating Optimization for Beamforming Design}\label{Sec:AO}
Due to the coupled terms $\bm F \bm H_k \bm W_k$, problem \eqref{eq:prob_MiLAC_simplification} is not jointly convex in $\bm F$ and $\{\bm W_k\}$. To address this coupling, we develop an AO framework that alternates between optimizing $\{\bm W_k\}$ for a fixed $\bm F$ and optimizing $\bm F$ for fixed $\{\bm W_k\}$. These two blocks are updated iteratively until a prescribed convergence criterion is satisfied.

\subsection{Optimization of $\{\bm W_k\}$}
For a fixed $\bm F$, and after dropping the terms independent of $\{\bm W_k\}$ from the MSE in \eqref{eq:MSE}, problem \eqref{eq:prob_MiLAC_simplification} reduces to
\begin{subequations}\label{eq:opt_W}
    \begin{align}
        \min_{\{\bm W_k\}} ~& \sum_{k = 1}^K \left( {\rm Tr} ( \bm W_k^H \bm A_k^H \bm A_k \bm W_k ) - 2 \Re \left\{ {\rm Tr} (\bm A_k \bm W_k) \right\} \right) \label{eq:opt_W_obj} \\[1ex]
        {\rm s.t.} ~&~ {\rm Tr} \left( \bm W_k^H \bm W_k \right) \le P_k, \forall k \in \mathcal K,
    \end{align}
\end{subequations}
where $\bm A_k \triangleq \frac{1}{4} \bm F \bm H_k \in \mathbb C^{L\times N}$.

Since the precoding matrices $\{\bm W_k\}$ in \eqref{eq:opt_W} are decoupled across EDs, problem \eqref{eq:opt_W} can be decomposed into $K$ independent subproblems. Specifically, for ED $k$, the optimal $\bm W_k$ can be obtained by solving
\begin{subequations}\label{eq:opt_Wk}
    \begin{align}
        \min_{\bm W_k}~~& {\rm Tr} (\bm W_k^H \bm A_k^H \bm A_k \bm W_k) - 2 \Re \left\{ {\rm Tr} (\bm A_k \bm W_k) \right\} \\[1ex]
        {\rm s.t.}~~& {\rm Tr} \left( \bm W_k^H \bm W_k \right) \le P_k. \label{eq:opt_Wk_const}
    \end{align}
\end{subequations}
Problem \eqref{eq:opt_Wk} is a convex quadratically constrained quadratic program (QCQP) and Slater's condition holds. Therefore, we can employ the KKT conditions to solve \eqref{eq:opt_Wk} optimally.

Let $\lambda_k \ge 0$ denote the Lagrange multiplier associated with the constraint \eqref{eq:opt_Wk_const}. Then, the Lagrangian of \eqref{eq:opt_Wk} is given by
\begin{align}
    \mathcal{L}_k = & {\rm Tr} ( \bm W_k^H \bm A_k^H \bm A_k \bm W_k ) - 2 \Re \left\{ {\rm Tr} (\bm A_k \bm W_k) \right\} \nonumber \\[2ex]
    & + \lambda_k \left({\rm Tr} \left( \bm W_k^H \bm W_k \right) - P_k\right).
\end{align}
The corresponding KKT conditions are expressed as
\begin{subequations}\label{eq:opt_Wk_KKT}
    \begin{align}
        \frac{\partial \mathcal{L}_k}{\partial \bm W_k^*} = \bm A_k^H \bm A_k \bm W_k - \bm A_k^H + \lambda_k \bm W_k = \bm 0_{N \times L}, \label{eq:opt_Wk_KKT_stationarity} \\
        {\rm Tr} \left( \bm W_k^H \bm W_k \right) \le P_k, \label{eq:opt_Wk_KKT_primalfeasibility} \\[1ex]
        \lambda_k \ge 0, \\[1ex]
        \lambda_k \left( {\rm Tr} \left( \bm W_k^H \bm W_k \right) - P_k \right) = 0. \label{eq:opt_Wk_KKT_complementary}
    \end{align}
\end{subequations}
Using the equality
\begin{equation*}
\left(\bm A_k^H \bm A_k + \lambda_k \bm I_N \right) \bm A_k^H = \bm A_k^H \left( \bm A_k \bm A_k^H + \lambda_k \bm I_L \right),
\end{equation*}
a solution to the stationarity condition \eqref{eq:opt_Wk_KKT_stationarity} is
\begin{equation}\label{eq:opt_Wk_star}
\bm W_k(\lambda_k) = \bm A_k^H \left( \bm A_k \bm A_k^H + \lambda_k \bm I_L \right)^{-1}.
\end{equation}
Under the considered continuous fading channel model and $L \le N$, $\bm A_k$ is assumed to have full row rank, i.e., $\operatorname{rank}(\bm A_k) = L$, which holds almost surely for a nondegenerate full row rank $\bm F$. Therefore, \eqref{eq:opt_Wk_star} is well defined for every $\lambda_k \ge 0$ including $\lambda_k = 0$.

The optimal multiplier $\lambda_k^\star$ is determined as follows. If $\|\bm W_k(0)\|_F^2 \le P_k$, then $\lambda_k^{\star} = 0$, and $\bm W_k(0) = \bm A_k^H(\bm A_k \bm A_k^H)^{-1}$ is the minimum-Frobenius-norm optimal solution. Otherwise, i.e., $\|\bm W_k(0)\|_F^2 > P_k$, the power constraint is active, and the unique $\lambda_k^{\star} > 0$ satisfies $\|\bm W_k(\lambda_k^\star)\|_F^2 = P_k$. Specifically, since $\|\bm W_k(\lambda_k)\|_F^2$ is strictly decreasing in $\lambda_k$, $\lambda_k^{\star}$ can be efficiently obtained via bisection over $\left[0, \sqrt{\frac{\|\bm A_k\|_F^2}{P_k}}\right]$. Finally, the optimal precoding matrix is given by
\begin{equation}\label{eq:opt_Wk_star2}
    \bm W_k^\star = \bm W_k(\lambda_k^\star).
\end{equation}

\subsection{Optimization of $\bm F$}
For fixed $\{\bm W_k\}$, and after dropping the terms independent of $\bm F$ from the MSE, problem \eqref{eq:prob_MiLAC_simplification} reduces to
\begin{subequations}\label{eq:opt_F}
    \begin{align}
        \min_{\bm F} &~ f(\bm F) \triangleq {\rm Tr} \left( \bm F^H \bm F \bm B \right) - 2 \Re\left\{ {\rm Tr} \left( \bm F \bm C \right)\right\} \label{eq:opt_F_obj} \\[1ex]
        {\rm s.t.} &~ \left\|\bm F\right\|_2 \le 1 \label{eq:opt_F_const}
    \end{align}
\end{subequations}
where $\bm B = \sum_{k = 1}^K \frac{1}{16} \bm H_k \bm W_k \bm W_k^H \bm H_k^H + \frac{1}{4} \sigma_n^2 \bm I_M$, and $\bm C = \sum_{k = 1}^K \frac{1}{4} \bm H_k \bm W_k$. Since $\bm B \succeq \frac{1}{4} \sigma_n^2 \bm I_M \succ \bm 0$, $f(\bm F)$ is strongly convex in $\bm F$. Moreover, the spectral-norm ball in \eqref{eq:opt_F_const} is closed and convex. Therefore, \eqref{eq:opt_F} is a convex optimization problem, which we solve using PGD as detailed below.

The gradient of $f(\bm F)$ with respect to $\bm F^{*}$ is
\begin{equation}\label{eq:opt_F_PGD_gra}
\bm G(\bm F) \triangleq \frac{\partial f(\bm F)}{\partial\bm F^*} = \bm F\bm B-\bm C^H.
\end{equation}
Define the feasible set as $\mathcal F \triangleq \{\bm F \in \mathbb C^{L \times M} | \left\|\bm F \right\|_2 \le 1 \}$. Accordingly, the PGD update at the $r$-th inner iteration is
\begin{equation}\label{eq:opt_F_PGD_F}
    \bm F^{(r)} = \operatorname{Proj}_{\mathcal F} \left( \bm F^{(r - 1)} - \eta \bm G \left(\bm F^{(r - 1)}\right) \right),
\end{equation}
where $\eta > 0$ is the step size and $\operatorname{Proj}_{\mathcal F}(\cdot)$ denotes the Euclidean projection under the Frobenius norm onto $\mathcal F$. The projection onto $\mathcal F$ is obtained by clipping the singular values at one. Specifically, given an arbitrary matrix $\bm Q \in \mathbb C^{L \times M}$, let ${\bm Q} = \bm U \operatorname{diag}\{\sigma_1, \ldots, \sigma_L\} \bm V^H$ denote its compact SVD. Then
\begin{equation*}
\operatorname{Proj}_{\mathcal F}(\bm Q) = \bm U \operatorname{diag}\left(\min\{\sigma_1, 1\}, \ldots, \min\{\sigma_L, 1\} \right) \bm V^H.
\end{equation*}
To determine the step size, we note that $\bm G(\bm F)$ is Lipschitz continuous. In particular, for any $\bm F_1$ and $\bm F_2$, we have
\begin{align*}
    \left\|\bm G(\bm F_1) - \bm G(\bm F_2)\right\|_F
    = & \left\|\left(\bm F_1 - \bm F_2 \right) \bm B \right\|_F \\[2ex]
    \le & \left\|\bm F_1 - \bm F_2 \right\|_F \left\| \bm B \right\|_2 \\[2ex]
    = & \lambda_{\max} (\bm B) \left\|\bm F_1 - \bm F_2 \right\|_F.
\end{align*}
Thus, a Lipschitz constant of $\bm G (\bm F)$ is $L_G=\lambda_{\max}(\bm B)$. We therefore set $\eta = L_G^{-1}$, with which the PGD iterations converge to the unique global optimum of \eqref{eq:opt_F}.

\subsection{Overall Algorithm}
The proposed PGD-based AO framework for solving \eqref{eq:prob_MiLAC_simplification} is summarized in Algorithm~\ref{Alg:MiLAC_AO_PGD}.

\begin{algorithm}[!htbp]
    \caption{Pseudocode for the PGD-based AO Algorithm}
    \label{Alg:MiLAC_AO_PGD}
    \begin{algorithmic}[1]
        \State \textbf{Input}: $\{\bm H_k\}$, $\{P_k\}$;
        \State Initialize feasible $\{\bm W_k^{(0)}\}$ and $\bm F^{(0)}$; set $t = 0$;
        \While{the AO stopping criterion is not satisfied}
            \State Update $\{\bm W_k^{(t+1)}\}$ for fixed $\bm F^{(t)}$ using \eqref{eq:opt_Wk_star2};
            \State Construct $\bm B$ and $\bm C$ using $\{\bm W_k^{(t+1)}\}$;
            \State Initialize the PGD iterations with $\bm F^{(t)}$;
            \State Obtain $\bm F^{(t+1)}$ by iterating \eqref{eq:opt_F_PGD_F} until convergence;
            \State Set $t \leftarrow t+1$;
        \EndWhile
        \State Set $\bm W_k^{\star} = \bm W_k^{(t)}$, $\forall k \in \mathcal{K}$, and $\bm F^\star = \bm F^{(t)}$;
        \State Recover $\bm\Theta^\star$ and $\bm Y^\star$ using \eqref{eq:Theta_recovery} and \eqref{eq:Y_recovery};
        \State \textbf{Output}: $\{\bm W_k^\star\}$, $\bm F^\star$, $\bm\Theta^\star$, and $\bm Y^\star$.
    \end{algorithmic}
\end{algorithm}

We next establish the convergence of the objective values generated by Algorithm~\ref{Alg:MiLAC_AO_PGD}. Let $\mathcal J^{(t)} \triangleq {\rm MSE} \left(\bm F^{(t)}, \{\bm W_k^{(t)}\}\right)$ denote the objective value at the end of the $t$-th AO iteration. During the $(t+1)$-th iteration, we have
\begin{subequations}
    \begin{align}
        \mathcal J^{(t)} & \overset{(a)}{\ge} {\rm MSE} \left(\bm F^{(t)}, \{\bm W_k^{(t + 1)}\} \right) \\[1ex]
        & \overset{(b)}{\ge} {\rm MSE} \left(\bm F^{(t + 1)}, \{\bm W_k^{(t + 1)}\} \right) = \mathcal J^{(t + 1)},
    \end{align}
\end{subequations}
where $(a)$ holds because $\{\bm W_k^{(t+1)}\}$ globally solves \eqref{eq:opt_W} and $(b)$ holds because the PGD iterations do not increase the objective of \eqref{eq:opt_F}. Since the MSE is lower bounded by zero, the sequence $\{\mathcal J^{(t)}\}$ is monotonically non-increasing and therefore converges to a finite limit. Moreover, as the original joint problem is nonconvex, convergence of the objective values does not imply global optimality of the obtained solution.

\subsection{Recovery of the MiLAC Circuit Parameters}
After the AO algorithm converges and yields the optimized scaled aggregation matrix $\bm F^\star$, the corresponding MiLAC circuit parameters can be recovered through a standard network-synthesis procedure \cite{MatteoNerini2025MiLACI-TSP,MatteoNerini2025MiLACII-TSP}. By Lemma~\ref{lem:unitary_completion}, $\bm F^\star$ admits a symmetric and unitary completion of the form
\begin{equation}\label{eq:Theta_recovery}
    \bm\Theta^\star
    =
    \begin{bmatrix}
        \bm\Theta_{11}^\star & (\bm F^\star)^T\\
        \bm F^\star & \bm\Theta_{22}^\star
    \end{bmatrix},
\end{equation}
where $\bm\Theta_{11}^\star$ and $\bm\Theta_{22}^\star$ can be constructed using the SVD-based unitary-completion procedure in \cite[Proposition 1]{ZheyuWu2026MiLACMISO-arXiv}. For a completion satisfying $\det(\bm I_{M+L}+\bm\Theta^\star)\neq0$, the corresponding finite admittance matrix is obtained through the inverse Cayley transform as
\begin{equation}\label{eq:Y_recovery}
    \bm Y^\star = \frac{1}{Z_0} \left(\bm I_{M+L}-\bm\Theta^\star\right) \left(\bm I_{M+L} + \bm\Theta^\star\right)^{-1}.
\end{equation}
Let $\bar{Y}_{p, p}^\star$ denote the admittance connecting the $p$-th port of the MiLAC to ground, and let $\bar{Y}_{p, q}^\star$ denote the admittance connecting the $p$-th and $q$-th ports. According to \cite{MatteoNerini2025MiLACI-TSP} and \cite{MatteoNerini2025MiLACII-TSP}, the tunable admittance components can be recovered as
\begin{equation}\label{eq:admittance_recovery}
    \bar{Y}_{p, q}^{\star} = \begin{cases}
        - [\bm Y^{\star}]_{p, q} & p \neq q \\[1ex]
        \sum_{j = 1}^{M + L} [\bm Y^{\star}]_{j, q} & p = q
    \end{cases} ~,
\end{equation}
for $p,~q = 1, \cdots, M + L$. 

Since $\bm\Theta^\star$ is symmetric and unitary, $\bm Y^\star$ is symmetric and skew-Hermitian, and hence purely imaginary. Therefore, the recovered circuit satisfies the reciprocity and losslessness requirements. It is worth noting that this recovery is performed only once after AO convergence and does not affect the iterative beamforming optimization.

\subsection{Complexity Analysis} \label{subsec:Complexity}
The computational complexity of Algorithm~\ref{Alg:MiLAC_AO_PGD} mainly arises from updating $\{\bm W_k\}$ and $\bm F$ in each AO iteration. Specifically, updating $\{\bm W_k\}$ requires $\mathcal{O}\left( K \left( L M N + N L^2 + J_1 (L^3 + N L^2 + N L) \right) \right)$ operations, where $J_1$ denotes the number of bisection iterations. Constructing $\bm B$ and $\bm C$ requires $\mathcal{O}\left( K (M N L + M^2 L) \right)$ operations, while computing $\lambda_{\max}(\bm B)$ for the PGD step size requires $\mathcal O(M^3)$ operations. Each PGD iteration has complexity $\mathcal O(L M^2 + L^2 M + L M)$, including the gradient computation and the projection onto the spectral-norm ball. Therefore, the overall complexity is $\mathcal O \big(I_1 (K L M N + K (M^2 L + N L^2) + K J_1 (L^3 + N L^2 + N L) + M^3 + J_2 (L M^2 + L^2 M + L M)) \big)$, where $J_2$ and $I_1$ denote the numbers of PGD and AO iterations, respectively. The one-time recovery of $\bm \Theta^\star$ and $\bm Y^\star$ is excluded from the above iterative complexity.

\section{Numerical Results}\label{Sec:NR}
In this section, we verify the convergence of the proposed PGD-based AO algorithm and evaluate the aggregation performance of the proposed MiLAC-aided beamforming design. To enable fair comparisons across different numbers of data streams, we adopt the average MSE per stream, defined as ${\rm MSE}_{\rm stream} = {\rm MSE} / L$.

Following \cite{MatteoNerini2025MiLACII-TSP}, the entries of $\{\bm H_k\}$ are independently generated according to $\mathcal{CN}(0,1)$. The normalized source-signal power budget of each ED is set to $P_k = 10$, $\forall k \in \mathcal K$, and the nominal source SNR is defined as $\text{SNR} \triangleq {P_k}/{\sigma_n^2}$. All results are averaged over 1000 independent channel realizations.

For comparison, we consider one algorithmic benchmark and two beamforming architecture benchmarks: 1) \textbf{AO-SDP}, which replaces the PGD update of $\bm F$ with the SDP formulation in Appendix~\ref{App:A}; 2) \textbf{Digital} \cite{GuangxuZhu2019MIMO-IOTJ}, which employs fully digital beamforming with $M$ RF chains at the AP; and 3) \textbf{Hybrid} \cite{XiongfeiZhai2021Hybrid-TCOM}, which employs phase-shifter-based hybrid beamforming with $L$ RF chains at the AP.

\begin{figure}[htbp!]
\vskip2pt
\centering
\includegraphics[width = 7.8cm]{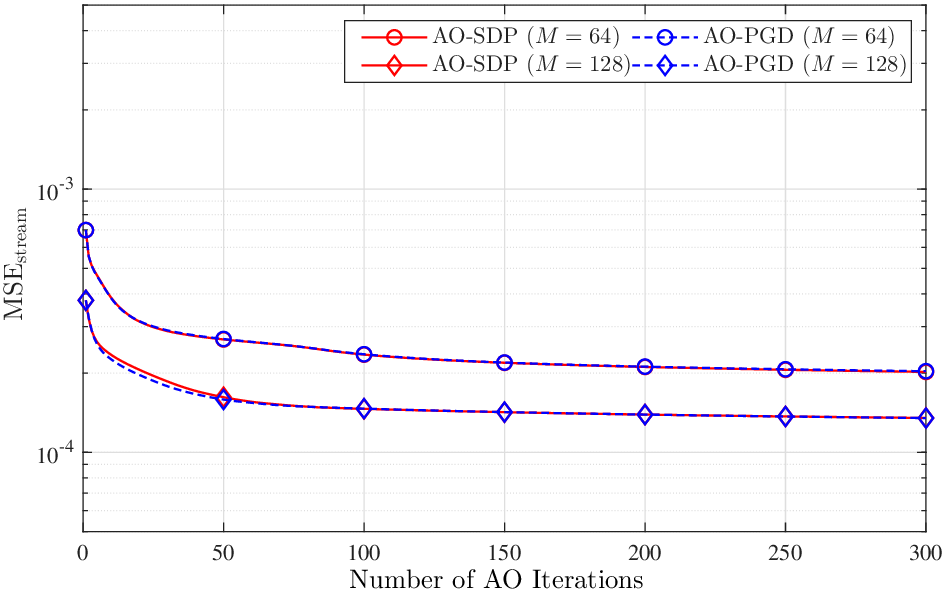}
\caption{Per-stream MSE versus the number of AO iterations, where $K = 20$, $N = 4$, $L = 2$, $\text{SNR} = 10$~dB, and $M \in \{64, 128\}$.} \label{fig:MSE_iter}
\end{figure}

Fig.~\ref{fig:MSE_iter} verifies the convergence of the proposed AO-PGD algorithm and evaluates the accuracy of its PGD update against the AO-SDP benchmark. The objective values decrease steadily and eventually stabilize, consistent with the convergence analysis. More importantly, AO-PGD and AO-SDP exhibit nearly overlapping convergence curves and attain essentially identical objective values, providing numerical evidence that PGD converges to the globally optimal solution of the convex $\bm F$-subproblem. Therefore, AO-SDP is used only to validate the PGD update, while all subsequent MiLAC results are obtained using AO-PGD.

\begin{figure}[htbp!]
\centering
\includegraphics[width = 7.8cm]{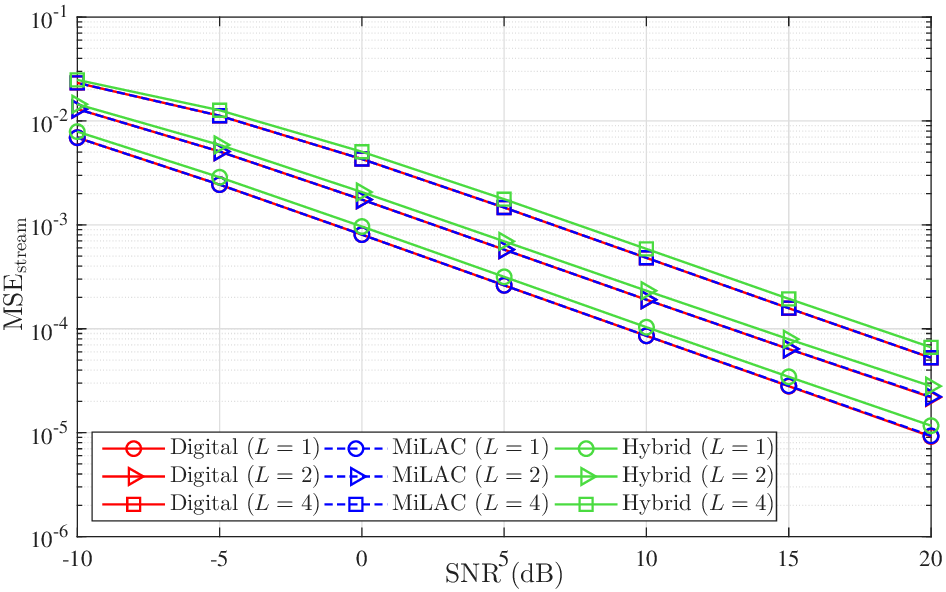}
\caption{Per-stream MSE versus the nominal source SNR, where $K = 20$, $N = 4$, $M = 64$, and $L \in \{1, 2, 4\}$.} \label{fig:MSE_SNR}
\end{figure}

Fig.~\ref{fig:MSE_SNR} compares the aggregation performance of the three beamforming architectures at different nominal source SNRs. MiLAC-aided beamforming closely approaches fully digital beamforming while requiring only $L$, rather than $M$, RF chains at the AP. It also consistently outperforms phase-shifter-based hybrid beamforming under the same RF-chain budget, demonstrating the benefit of the more flexible analog-domain transformations enabled by the MiLAC. The performance degradation with increasing $L$ further indicates that multi-stream aggregation imposes more stringent signal-alignment requirements.

\begin{figure}[htbp!]
\centering
\includegraphics[width = 7.8cm]{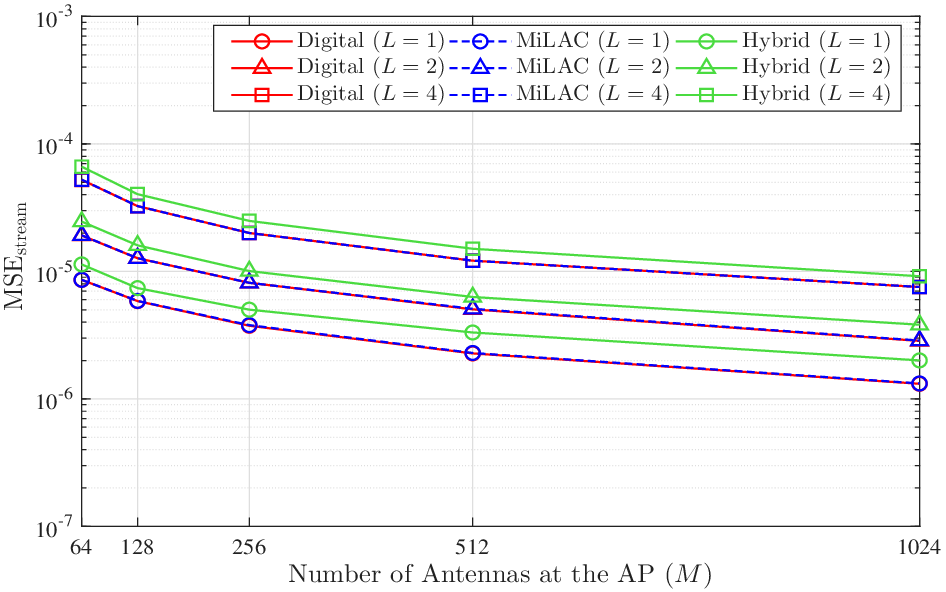}
\caption{Per-stream MSE versus the number of AP antennas $M$, where $K = 20$, $N = 4$, $\text{SNR} = 20$~dB, and $L \in \{1, 2, 4\}$.} \label{fig:MSE_M}
\end{figure}

Fig.~\ref{fig:MSE_M} evaluates the scalability of the considered beamforming architectures with the AP array size. MiLAC-aided beamforming remains close to fully digital beamforming over the entire range of $M$, while its RF-chain requirement remains fixed at $L$ rather than increasing with $M$. Thus, its RF-chain savings become increasingly pronounced as the antenna array scales up. Its consistent advantage over hybrid beamforming further confirms that the flexible analog-domain transformations enabled by the MiLAC are particularly effective for multi-stream AirComp.

\section{Conclusions}\label{Sec-CN}
In this paper, we investigated MiLAC-aided beamforming for MIMO AirComp systems. By exploiting an equivalent spectral-norm characterization of lossless and reciprocal MiLACs, we formulated a joint MSE minimization problem over the ED-side digital precoding matrices and the AP-side MiLAC aggregation matrix. To address the resulting nonconvex problem, we developed an AO algorithm in which the digital precoding matrices are optimally updated using the KKT conditions and one-dimensional bisection, while the MiLAC aggregation matrix subproblem is solved to its global optimum using PGD. The optimized aggregation matrix can subsequently be mapped to the physical MiLAC circuit parameters through a standard network-synthesis procedure. Numerical results verified the convergence of the proposed algorithm and demonstrated that MiLAC-aided beamforming approaches the MSE performance of fully digital beamforming with substantially fewer RF chains at the AP, while outperforming phase-shifter-based hybrid beamforming under the same RF-chain budget. Furthermore, although the proposed architecture substantially reduces the number of RF chains at the AP, the fully connected MiLAC requires a quadratic number of tunable admittances. Reduced-connectivity MiLAC architectures and practical component constraints will be investigated in future work.

\begin{appendices}
\section{SDP Formulation of the $\bm F$-Subproblem}\label{App:A}
In this appendix, we derive an equivalent SDP formulation of the $\bm F$-subproblem \eqref{eq:opt_F}. By introducing a real-valued auxiliary variable $u$, problem \eqref{eq:opt_F} can be equivalently written in the following epigraph form:
\begin{subequations}\label{eq:opt_F_SDP_prob}
    \begin{align}
        \min_{\bm F,u}~~&
        u - 2\Re\left\{{\rm Tr}(\bm F\bm C)\right\}
        \label{eq:opt_F_SDP_obj}\\
        {\rm s.t.}~~&
        {\rm Tr}(\bm F^H\bm F\bm B)\le u,
        \label{eq:opt_F_SDP_const_trace}\\
        &\|\bm F\|_2\le 1.
        \label{eq:opt_F_SDP_const_spectrum}
    \end{align}
\end{subequations}

Let $\bm B^{1/2}$ denote the Hermitian square root of $\bm B$. The quadratic term in \eqref{eq:opt_F_SDP_const_trace} satisfies
\begin{equation}
    {\rm Tr}(\bm F^H\bm F\bm B)
    =
    \|\bm F\bm B^{1/2}\|_F^2
    =
    \left\|\operatorname{vec}(\bm F\bm B^{1/2})\right\|_2^2.
\end{equation}
Therefore, by the Schur complement theorem, constraint \eqref{eq:opt_F_SDP_const_trace} is equivalent to the linear matrix inequality (LMI)
\begin{equation}\label{eq:opt_F_SDP_LMI_trace}
    \begin{bmatrix}
        u & \operatorname{vec}(\bm F\bm B^{1/2})^H\\
        \operatorname{vec}(\bm F\bm B^{1/2}) & \bm I_{LM}
    \end{bmatrix}
    \succeq \bm 0.
\end{equation}
Similarly, the spectral-norm constraint \eqref{eq:opt_F_SDP_const_spectrum} is equivalent to
\begin{equation}\label{eq:opt_F_SDP_LMI_spectrum}
    \begin{bmatrix}
        \bm I_M & \bm F^H\\
        \bm F & \bm I_L
    \end{bmatrix}
    \succeq \bm 0.
\end{equation}
Consequently, \eqref{eq:opt_F_SDP_prob} has a linear objective and two LMI constraints, and is therefore an SDP that can be solved to global optimality using CVX with an SDP solver.
\end{appendices}

\end{document}